\title{Adaptive Diffusion-based Augmentation for Recommendation}
\author{
    Na Li\textsuperscript{\rm 1},
    Fanghui Sun\textsuperscript{\rm 1},
    Yan Zou\textsuperscript{\rm 1},
    Yangfu Zhu\textsuperscript{\rm 2},
    Xiatian Zhu\textsuperscript{\rm 3},
    Ying Ma\textsuperscript{\rm 1}\thanks{Corresponding author.}\\
}
\begin{document}

\maketitle

\begin{abstract}
Recommendation systems often rely on implicit feedback, where only positive user-item interactions can be observed.
Negative sampling is therefore crucial to provide proper negative training signals.
However, existing methods tend to mislabel potentially positive but unobserved items as negatives and lack precise control over negative sample selection. 
We aim to address these by generating controllable negative samples, rather than sampling from the existing item pool.
In this context, we propose \textbf{A}daptive \textbf{D}iffusion-based \textbf{A}ugmentation for \textbf{R}ecommendation (\textbf{ADAR}), a novel and model-agnostic module that leverages diffusion to synthesize informative negatives.
Inspired by the progressive corruption process in diffusion, ADAR simulates a continuous transition from positive to negative, allowing for fine-grained control over sample hardness. 
To mine suitable negative samples, we theoretically identify the transition point at which a positive sample turns negative and derive a score-aware function to adaptively determine the optimal sampling timestep.
By identifying this transition point, ADAR generates challenging negative samples that effectively refine the model's decision boundary. 
Experiments confirm that ADAR is broadly compatible and boosts the performance of existing recommendation models substantially, including collaborative filtering and sequential recommendation, without architectural modifications.
\end{abstract}


\section{Introduction}
In modern recommendation systems, implicit feedback \cite{ding2020implicit} has become the predominant signal for learning user preferences, due to its ubiquitous availability in real-world applications. Such feedback typically includes user behaviors like clicks, views, or purchases, which are treated as positive samples. However, explicit negative signals are usually absent, making it challenging to identify items that users truly dislike. Yet the learning process critically depends on the ability to distinguish preferred items from non-preferred ones. In this context, the construction of high-quality negative samples is essential for guiding recommendation models \cite{rendle2014improving}.

To construct negative samples from implicit feedback, most existing models adopt sampling methods that pick negative samples from unobserved user-item interactions. A common method is to perform static random sampling over these unobserved interactions due to its simplicity and efficiency \cite{rendle2012bpr}.
However, this approach cannot provide enough negative signal, limiting model’s ability to learn accurate user preferences. To address this, recent studies in collaborative filtering have explored hard negative sampling strategies \cite{zhang2013dns}, which leverage unobserved items that are incorrectly ranked highly, thereby providing more challenging supervision signals. 
Building on this, more sophisticated sampling strategies such as MixGCF \cite{huang2021mixgcf} and AHNS \cite{lai2024ahns} have been proposed to find hard negative samples by mixing or scoring.
In sequential recommendation, sampling is further challenged by temporal and contextual dynamics of user behavior. GNNO \cite{fan2023gnno} enhances sampling by identifying hard ones based on global transitions, while recent context-aware method \cite{seol2025context} adapts sampling to varying distributions across time.

However, these strategies are suboptimal and suffer from several limitations. First, they tend to introduce false-negative signals, as some unobserved items may actually align with user interests but remain unseen due to limited exposure. Incorrectly treating potentially relevant items as negatives may introduces misleading signals during training, thereby hindering the learning of user preferences.
Second, existing methods lack fine-grained control in the negative sampling process. Although dynamic strategies \cite{zhang2013dns, zhao2023ans} improve upon static heuristics by ranking candidates based on their scores, they lack flexibility in obtaining negative samples with controllable hardness. 
Specifically, the current approaches are unable to precisely regulate the true gap between positive and negative samples, resulting in suboptimal ranking performance.

Instead of relying on negative samples drawn from the existing item pool, we aim to model a continuous transition from positive to negative samples to address these limitations.
To achieve this, we adopt a generative approach based on diffusion processes, which naturally simulate the gradual transformation.
Diffusion models \cite{ho2020ddpm,dhariwal2021diffbeat}, originally designed for high-fidelity image synthesis \cite{zhou2025attention}, gradually transform structured input into noise and learn to reverse this process. 
We observe that, during the forward noising process, a positive sample progressively loses its informative characteristics and eventually becomes indistinguishable from a negative sample. The continuous degradation offers a natural lens for identifying the transition point at which a sample crosses from positive to negative. 

In this paper, we propose \textbf{A}daptive \textbf{D}iffusion-based \textbf{A}ugmentation for \textbf{R}ecommendation (\textbf{ADAR}), a novel and model-agnostic augmentation module that employs diffusion to adaptively synthesize informative negative examples.
Our core idea is to model the degradation of positive samples through a diffusion process. 
During this process, positive samples are progressively corrupted, and a critical transition occurs when their identity shifts from positive to negative.
So we can treat the outputs at each timestep as a candidate pool of negative samples.
Then we provide a theoretical characterization of the transition and introduce a score-aware function to adaptively determine the optimal sampling timestep.
Leveraging this well-defined transition point, ADAR is able to adaptively generate informative and challenging negative samples.
These generated samples can then be used as a plug-and-play augmentation to enhance existing various recommendation models.

Our main contributions are summarized as follows:
\begin{itemize}
    \item We propose ADAR, a novel and model-agnostic augmentation module that utilizes diffusion to generate high-quality negative samples, addressing the limits of false-negative signals and controllability in existing methods.
    \item We theoretically define a transition point within the diffusion process that marks when a positive sample becomes negative, thereby enabling the adaptively choose of informative negative samples.
    \item ADAR is compatible with a wide range of models, including collaborative filtering and sequential recommendation, demonstrating strong generalizability without any model-specific modifications.
\end{itemize}

\section{Related Work}
\subsection{Negative Sampling}
In recommendation, negative sampling plays a crucial role in constructing training pairs that guide the model to distinguish between preferred and non-preferred items. 
The effectiveness of learning heavily depends on the quality and hardness of these negative samples. 
Simple uniform sampling over unobserved interactions \cite{rendle2012bpr} often yields trivial negatives and weak supervision. 
For better discriminative user/item representations, hard negative sampling selects challenging negatives ranked highly by the model \cite{zhang2013dns}.
MixGCF \cite{huang2021mixgcf} and DropMix \cite{ma2023dropmix} obtain negative samples through a dimensional mixing mechanism. Going a step further, DINS \cite{wu2023dins} and TriSampler \cite{yang2024trisampler} expand the sampling area to spatial and triangular areas. AHNS \cite{lai2024ahns} can select existing negative samples of different hardness.

In sequential recommendation, negative sampling must handle temporal and contextual dynamics. GNNO \cite{fan2023gnno} mines hard negatives based on global item transitions, while context-aware methods \cite{seol2025context} adapt sampling to context-dependent item distributions, improving recommendation quality under dynamic user behaviors.

However, these methods lack a principled way to model the continuous transition between positive and negative samples, leading to suboptimal model performance.

\subsection{Diffusion Model}
Diffusion model \cite{ho2020ddpm,dhariwal2021condiff2,ho2022condiff1} is a generative modeling method that progressively adds noise to data and learns to reverse this process, which can effectively capture data distribution for tasks such as images \cite{yi2024diffimage}, text \cite{arriola2025diffblock}, or molecular generation \cite{wang2024diffpep}.
In recommendation domain, diffusion models are commonly used to model user preferences.
For instance, DiffuRec \cite{li2023diffurec} employs denoising diffusion to model sequential recommendation with item information. 
DiQDiff \cite{mao2025diqdiff} guides diffusion model through semantic codebook. 
CCDRec \cite{yang2025ccdrec} proposes a conditioned diffusion for multi-model recommendation with curriculum learning. 
Moreover, potential of diffusion models for negative sampling has gained attention across different domains. 
DMNS \cite{nguyen2024dmns} formulates multi-level sampling via conditional diffusion for link prediction,
and MMKGC \cite{niu2025mmkgc} extends this to knowledge graph by incorporating multi-model semantics.

However, existing diffusion-based samplers typically rely on fixed timesteps and are often tailored to specific domains.
In contrast, our method introduces a transition-point-driven sampling, which can adaptively select negative samples and serve as a plug-and-play module to enhance models.

\begin{figure*}[!t] 
    \centering
    \includegraphics[width=0.9\linewidth]{./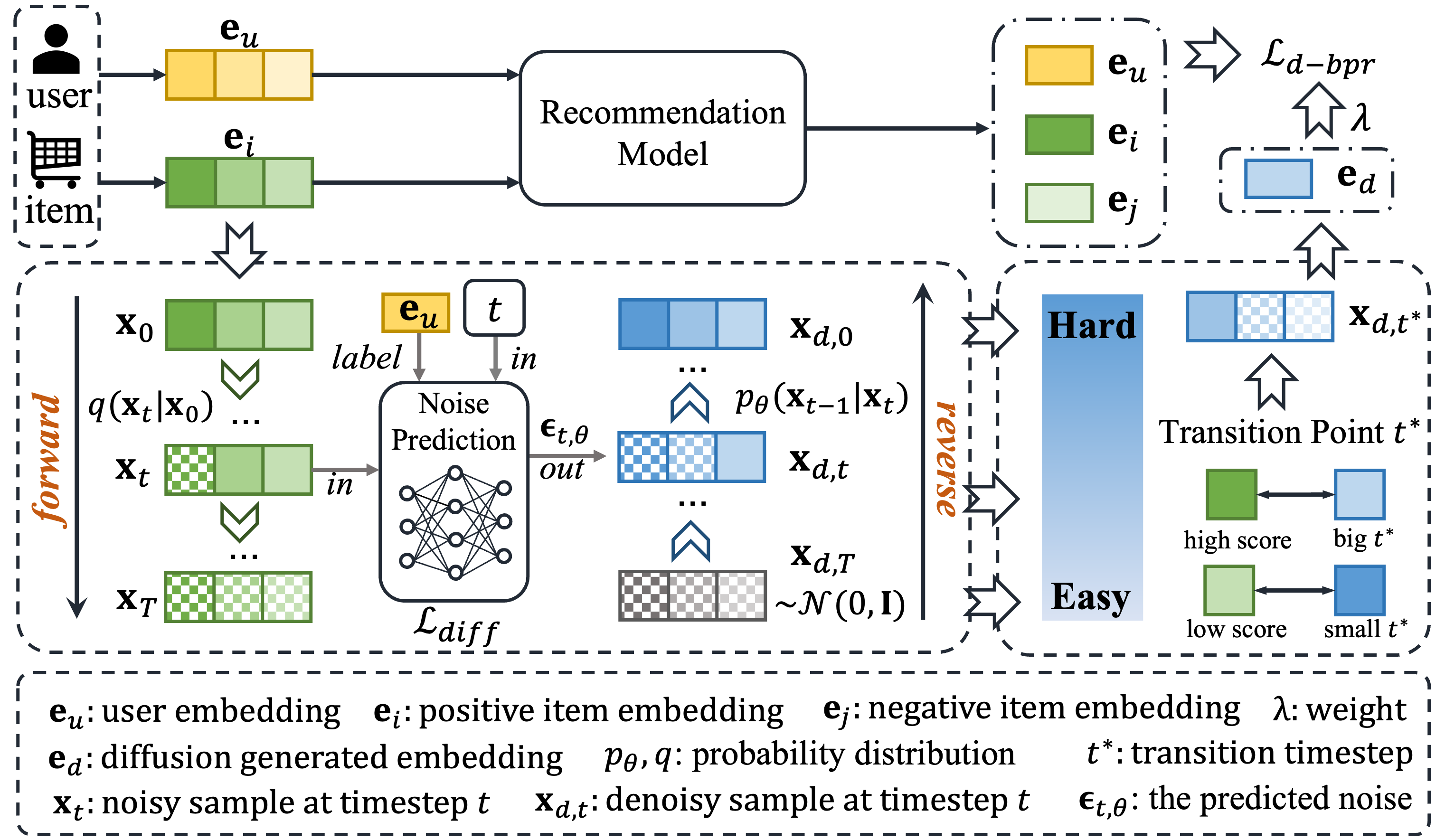}
    \caption{Overview of the proposed ADAR: ADAR leverages the generative dynamics of diffusion models to produce meaningful negatives with controllable hardness. First, we apply a diffusion model to generate negative candidates by gradually corrupting positive samples. Then, we determine the transition point $t^*$ through a score-aware function to adaptively choose the augmented negative. ADAR can serve as a plug-and-play module to enhance diverse recommendation models.}
    \label{fig:main_figure}
\end{figure*}

\section{Preliminary}
\subsection{Problem Formulation}
We denote the set of historical user-item interactions by $\mathcal{D} ^+=\{(u,i)|u\in \mathcal{U},i\in \mathcal{I}\}$, where $\mathcal{U}$ and $\mathcal{I}$ are the set of users and the set of items, respectively. 
We consider all unobserved interactions $\mathcal{D}^- =\{(u,j)|u\in \mathcal{U},j\notin \mathcal{D}_u ^+\}$ as candidates for negative samples.
Standard recommendation models optimize a pairwise loss by training on triplets $(u,i,j)$, where $i$ is a positive item and $j$ is a negative item sampled from $\mathcal{D}^-$. 
Our goal is to enhance model performance by introducing extra negative samples generated through diffusion-based perturbation of positive sample $i$. 
These samples can simulate hard negatives, thus encouraging model to learn more discriminative representations.

\subsection{Diffusion Model}
Diffusion model \cite{ho2020ddpm,dhariwal2021diffbeat} runs a forward process to add Gaussian noise to the input, and then performs a reverse process to denoise and reconstruct the original input in an iterative manner. Given an input $\mathbf{x}_0 \sim q(\mathbf{x}_0)$ and total number of diffusion steps $T$ to get subsequent state $\mathbf{x}_1,\mathbf{x}_2,...,\mathbf{x}_T$, the forward process of the diffusion model can be defined as
\begin{equation}
    q(\mathbf{x}_t|\mathbf{x}_0)=\mathbf{\mathcal{N}}(\mathbf{x}_t;\sqrt{\bar{\alpha }_t }\mathbf{x}_0,(1-\bar{\alpha }_t)\mathbf{I}),
\end{equation}
\begin{equation}
    \mathbf{x}_t=\sqrt{\bar{\alpha }_t }\mathbf{x}_0+\sqrt{(1-\bar{\alpha }_t)}\mathbf{\epsilon}_t,
    \label{eq:forward}
\end{equation}
where cumulative signal retention $\bar{\alpha }_t={\textstyle \prod_{s=1}^{t}\alpha_s}$, $\alpha_t=1-\beta_t$, and $\beta_1,\beta_2,...,\beta_T$ controls the amount of added noise. $t \in [0,T]$ represents the index of timesteps. $\epsilon_t$ represents the noise sampled form $\mathbf{\mathcal{N}}(0,\mathbf{I})$. Typically, $\beta_t$ increases monotonically over time according to a predefined schedule (e.g., linear, cosine or sigmoid), resulting in a gradual increase in noise level. 

And the reverse process uses a neural network with parameter $\theta$ to predict the noise added in the forward process. The reverse transformations can be formulated as:
\begin{equation}
    p_\theta(\mathbf{x}_{t-1}|\mathbf{x}_t) = \mathcal{N}(\mathbf{x}_{t-1};\mu_\theta (\mathbf{x}_t,t), {\textstyle \sum_{\theta }(\mathbf{x}_t,t)}).
\end{equation}
In order to model $p_\theta(\mathbf{x}_{t-1}|\mathbf{x}_t)$, DDPM \cite{ho2020ddpm} usually fixes ${\textstyle \sum_{\theta }(\mathbf{x}_t,t)}$ in advance, and use a reparameterization trick to derive $\mu_\theta (\mathbf{x}_t,t)$:
\begin{equation}
    \mu_\theta(\mathbf{x}_t,t)=\frac{1}{\sqrt{\alpha_t}}\mathbf{x}_t-\frac{1-\alpha_t}{\sqrt{\alpha_t} \sqrt{1-\bar{\alpha_t}}}\epsilon_\theta (\mathbf{x}_t,t),
\end{equation}
where $\epsilon_\theta (\mathbf{x}_t,t)$ is the predicted target, which should be similar to Gaussian noise. In actual operation, we start from $\mathbf{x}_T\sim \mathcal{N}(0,\mathbf{I}) $, and then restore $\mathbf{x}_0$ step by step.

\section{Method}
We propose Adaptive Diffusion-based Augmentation for Recommendation (ADAR), a model-agnostic module for generating high-quality negative samples. 
ADAR simulates a gradual degradation of positive samples through a diffusion process and we detail the formulation of this process. 
ADAR's main architecture is shown in Figure~\ref{fig:main_figure}.

\subsection{Diffusion-based Sample Generation}
In this section, we describe how ADAR generates candidate negative samples.
ADAR learns the progressive noising and denoising process, enabling controllable negative sample generation for recommendation.

Let $i$ in $(u,i) \in \mathcal{D}^+$ be a positive sample. The corresponding embedding $\mathbf{e}_i\in \mathbb{R}^d $ is first obtained from an encoder $\phi (i)$, which may be derived from any different model according to the specific recommendation tasks. The user representation $\mathbf{e}_u$ is similarly obtained from $\phi (u)$. 

\textbf{Forward Process.} We apply the forward diffusion process according to Eq.\ref{eq:forward}, producing a sequence of progressively noised embeddings. Specifically, we set $\mathbf{x}_0$ equal to $\mathbf{e}_i$, perform the forward diffusion process, and obtain noisy sample set $\mathcal{X}=\{\mathbf{x}_t\}_{t=0}^T$. This process gradually removes positive information from the original interaction, pushing $\mathbf{x}_t$ toward a standard Gaussian distribution.

\textbf{Noise Prediction Model.}
To generate candidate negative samples, we need to approximate the noise added at each time step, which can be formulated as
\begin{equation}
    \epsilon_{t,\theta} = \tau(\mathbf{x}_t,\mathbf{e}_t,\mathbf{e}_u;\theta),
\end{equation}
where $\mathbf{e}_t=\mathrm{PE}(t)$ is the time embedding, and $\tau$ is the noise prediction model with learnable parameter $\theta$.
Meanwhile, in order to make full use of the user embedding $\mathbf{e}_u$, we use the conditional diffusion model \cite{dhariwal2021condiff2,ho2022condiff1} to input the user embedding as label information to guide the generation.

We adopt sine and cosine coding to construct a continuous, periodic time embedding, following the approach in \cite{ho2020ddpm}. The timestep embedding is formulated as:
\begin{equation}
    \mathrm{PE}(t) = 
    \begin{cases}
        sin(t\cdot{10000^{-\frac{2i}{d_t}}}),\text{if $i$ is even} \\
        cos(t\cdot{10000^{-\frac{2i}{d_t}}}),\text{if $i$ is old} 
    \end{cases},
\end{equation}
where $i\in[0,d-1]$ is the dimension index. This encoding method facilitates the model’s learning of relative timestep relationships.

To better adapt to recommendation tasks, we implement $\tau$ using feature-wise linear modulation (FiLM) \cite{perez2018film}, which dynamically modulates intermediate feature representations based on timestep and label embeddings. This approach enables efficient noise prediction while ensuring a lightweight adaptation to structured recommendation data.
Specifically, $\tau$ is formulated as:
\begin{equation}
    \epsilon_{t,\theta}=\gamma(\mathbf{e}_t,\mathbf{e}_u;\theta_\gamma) \odot \mathbf{x}_{t} + \eta(\mathbf{e}_t,\mathbf{e}_u;\theta_\eta),
\end{equation}
where $\gamma$ and $\eta$ are interpreted as scale and offset, respectively, and $\odot$ represents element-wise multiplication. 
In our model, these components are implemented as a simple multilayer perceptron with multiple fully connected layers.

\textbf{Diffusion Loss.}
Following the standard formulation of diffusion models \cite{dhariwal2021diffbeat}, we employ a mean squared error (MSE) loss to minimize the discrepancy between the predicted noise in the reverse process and the Gaussian noise injected during forward diffusion.
However, in recommendation settings, recovering the noise alone does not ensure that the final generated samples retain the semantic essence of the original positive instances \cite{li2023diffurec}. To address this, we incorporate an additional loss that enforces consistency between the final generated output and its corresponding positive sample.
Specifically, for a given timestep $t$, the diffusion loss function is defined as:
\begin{equation}
    \label{diff_loss}
    \mathcal{L}_{diff}=\left \| \epsilon -\epsilon_{t,\theta} \right \| ^2 + ||\mathbf{x}_0-\mathbf{x}_{d,0}||^2.
\end{equation}

\textbf{Reverse Process.}
After the noise prediction model $\tau$ is trained, we employ it to generate candidate samples via the reverse diffusion process, where each timestep output serves as a potential negative sample. 
Specifically, we start from a Gaussian noise sampled from the standard normal distribution, representing a maximally corrupted version in the latent space:
\begin{equation}
    \mathbf{x}_{d,T} \sim \mathcal{N}(0, \mathbf{I}),
\end{equation}
and progressively apply the reverse update rule:
\begin{equation}
    \mathbf{x}_{d,t-1} = \frac{1}{\sqrt{\alpha_t}} \mathbf{x}_{d,t} - \frac{1 - \alpha_t}{\sqrt{\alpha_t} \sqrt{1 - \bar{\alpha}_t}} \epsilon_{t,\theta},
\end{equation}
for $t = T, T-1, \dots, 1$. 
Each denoising step incrementally restores information, resulting in the denoised sample set $\mathcal{X}_d = \{\mathbf{x}_{d,t}\}_{t=0}^{T}$.
These outputs are novel, plausible candidates resembling corrupted versions of positive items.

\subsection{Transition Point Detection}
However, not all generated samples in $\mathcal{X}_d$ are equally informative for training. Samples that are too similar to the original positives may reinforce redundancy, while those that are too noisy may introduce harmful signals.
To address this, we propose an adaptive mechanism to identify the transition point, the moment when a generated sample shifts from being positive to negative in terms of user preference. 
We select the generated sample at the transition point as the augmented negative sample, which lies near the decision boundary, and thus offer maximum utility for training.

\newtheorem{theorem}{Theorem}
\begin{theorem}
    Let $\mu^+$ be the positive item score. Then the cumulative signal retention $\bar{\alpha}_{t^*}$ value at the transition point $t^*$ is negatively correlated with $\mu^+$.
    \label{thm:negative}
\end{theorem}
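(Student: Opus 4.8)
The plan is to turn the informal phrase \emph{transition point} into a precise condition on the model's scoring function, track how the preference score of a progressively noised positive sample decays with the diffusion step, and then read off the dependence of $\bar{\alpha}_{t^*}$ on $\mu^+$.

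First I would fix the scoring convention used throughout: for user $u$ and an item embedding $\mathbf{x}$, the preference score is the bilinear form $s(\mathbf{x})=\langle \mathbf{e}_u,\mathbf{x}\rangle$, so that $\mu^+ = s(\mathbf{x}_0)=\langle \mathbf{e}_u,\mathbf{e}_i\rangle$ with $\mathbf{x}_0=\mathbf{e}_i$, while a reference negative level $\mu^-$ (the mean score over $\mathcal{D}^-$, or the decision-boundary value) is a constant independent of the particular positive $i$. Applying the forward process of Eq.~\ref{eq:forward}, $\mathbf{x}_t=\sqrt{\bar{\alpha}_t}\,\mathbf{x}_0+\sqrt{1-\bar{\alpha}_t}\,\epsilon_t$ with $\epsilon_t\sim\mathcal{N}(0,\mathbf{I})$, so by linearity and $\mathbb{E}[\epsilon_t]=0$ the expected score is $\mathbb{E}[s(\mathbf{x}_t)]=\sqrt{\bar{\alpha}_t}\,\mu^+$, while the noise contributes variance $(1-\bar{\alpha}_t)\|\mathbf{e}_u\|^2$. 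The structural point is that the expected score is exactly the original score attenuated by $\sqrt{\bar{\alpha}_t}$, a factor monotonically decreasing in $t$, and that the signal-to-noise ratio $\mathrm{SNR}(t)=\sqrt{\bar{\alpha}_t}\,\mu^+ \big/ \sqrt{(1-\bar{\alpha}_t)}\,\|\mathbf{e}_u\|$ is strictly decreasing in $t$ as well.

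Next I would define $t^*$ as the step at which the corrupted sample first stops being classified as positive. Under the mean-crossing reading this is $\sqrt{\bar{\alpha}_{t^*}}\,\mu^+=\mu^-$, giving $\bar{\alpha}_{t^*}=(\mu^-/\mu^+)^2$; under a signal-to-noise reading $\mathrm{SNR}(t^*)=c$ for a fixed threshold $c>0$, giving $\bar{\alpha}_{t^*}=c^2\|\mathbf{e}_u\|^2/\big((\mu^+)^2+c^2\|\mathbf{e}_u\|^2\big)$. In either case, differentiating with respect to $\mu^+>0$ yields a strictly negative derivative (e.g. $\partial\bar{\alpha}_{t^*}/\partial\mu^+=-2(\mu^-)^2/(\mu^+)^3<0$ in the first case), so $\bar{\alpha}_{t^*}$ is strictly decreasing in $\mu^+$, i.e. negatively correlated with it; equivalently, since $\bar{\alpha}_t$ itself decreases in $t$, a larger positive score pushes the transition to a later timestep and hence requires more corruption. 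I would also note that the conclusion is robust to the precise form of the threshold, because all candidate definitions factor the $\mu^+$-dependence through the same attenuation factor $\sqrt{\bar{\alpha}_t}$.

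The main obstacle is justifying the chosen definition of the transition point rather than positing it: I would need to argue that the expected-score (or likelihood-ratio) crossing is the operationally correct notion, and to control the stochasticity introduced by $\epsilon_t$ — for instance by showing that in high embedding dimension $s(\mathbf{x}_t)$ concentrates around its mean, or by replacing the hard crossing with the point where $\Pr[s(\mathbf{x}_t)<\mu^-]=\tfrac12$, which coincides with the mean crossing by symmetry of the Gaussian. A secondary subtlety appears if the backbone uses cosine similarity rather than a raw inner product, since then the normalization $\|\mathbf{x}_t\|$ must be propagated through the expectation using $\mathbb{E}\|\mathbf{x}_t\|^2=\bar{\alpha}_t\|\mathbf{x}_0\|^2+(1-\bar{\alpha}_t)d$; the resulting correction is again monotone in $\mu^+$, so the sign of the correlation is preserved.
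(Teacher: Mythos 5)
Your proposal is correct and follows essentially the same route as the paper: linearize the score, use $\mathbb{E}[s(\mathbf{x}_t)]=\sqrt{\bar{\alpha}_t}\,\mu^+$, define $t^*$ by the crossing $\sqrt{\bar{\alpha}_{t^*}}\,\mu^+=\mu^-$ to get $\bar{\alpha}_{t^*}=(\mu^-/\mu^+)^2$, and differentiate to obtain the strictly negative dependence on $\mu^+$. Your additional remarks (SNR-based threshold, concentration, cosine-similarity correction) go beyond the paper's argument but do not change the core derivation.
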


\begin{proof}
We begin by modeling the user preference score as a function $f(\mathbf{u}, \mathbf{x})$, where $\mathbf{u}$ and $\mathbf{x}$ denote the user and item representations, respectively (with $\mathbf{u} = \mathbf{e}_u$). As we agreed before, $\mathbf{x}_0$ denote a positive sample, and $\mathbf{x}_t$ its noised version at diffusion step $t$.
Under the assumption that the scoring function $f(\mathbf{u}, \mathbf{x})$ is approximately linear in $\mathbf{x}$ (e.g., inner product models), the expected preference score of the noised item $\mathbf{x}_t$ becomes:
\begin{align}
\mathbb{E}[f(\mathbf{u}, \mathbf{x}_t)] 
&= \mathbb{E}[f(\mathbf{u}, \sqrt{\bar{\alpha}_t} \mathbf{x}_0 + \sqrt{(1 - \bar{\alpha}_t)} \epsilon_t)] \notag \\
&= \sqrt{\bar{\alpha}_t} \cdot f(\mathbf{u}, \mathbf{x}_0) \notag \\
&= \sqrt{\bar{\alpha}_t} \cdot \mu^+ \notag,
\end{align}
where $\mu^+ = f(\mathbf{u}, \mathbf{x}_0)$ is the score of the original positive item, which is assumed to be greater than zero to reflect user preference. 
We define the transition point $t^*$ as the first diffusion step at which the expected user score for $\mathbf{x}_t$ falls below a predefined negative threshold $\mu^-$. That is,
\[
\mathbb{E}[f(\mathbf{u}, \mathbf{x}_{t^*})] = \mu^- \quad \Rightarrow \bar{\alpha}_{t^*} = \left( \frac{\mu^-}{\mu^+} \right)^2.
\]
Taking derivative w.r.t. $\mu^+$:
\[
\frac{d\bar{\alpha}_{t^*}}{d\mu^+} = -2 \left( \frac{\mu^-}{\mu^+} \right)^2 \cdot \frac{1}{\mu^+} < 0,
\]
demonstrating that $\bar{\alpha}_{t^*}$ is strictly decreasing in $\mu^+$.
\end{proof}

Since $\bar{\alpha}_t$ value is a strictly decreasing function of $t$, it follows that $t^*$ must increase with $\mu^+$. That is, positive samples with higher scores retain their information identity longer during diffusion, and undergo transition at later steps.

While this formulation provides theoretical guidance, practical datasets lack explicit negative interactions, making $\mu^-$ unobservable. To circumvent this, we adopt a parameterized heuristic based on the observed positive score $p_s = f(\mathbf{u}, \mathbf{x}_0)$ to approximate $t^*$. Specifically, we define a score-aware function:
\begin{equation}
    t^* = \text{sigmoid}(\omega \cdot \exp(k \cdot p_s)) \cdot T,
\end{equation}
where $w>0$ and $k>0$ are tunable hyperparameters that control the shape and scale of the transition function. 
In practice, we can set $\omega=1$ and $k=1$, which consistently lead to performance improvements across most datasets, as verified by extensive experiments.

This score-aware function aligns with our theoretical result, capturing the negative correlation between the positive score and the cumulative signal retention $\bar{\alpha}_{t^*}$ value. 
By adaptively mapping higher $p_s$ to deeper diffusion step $t^*$, it dynamically adjusts the distance of generated samples according to the observed positive score.

After determining the transition point $t^*$, we then select the sample $\mathbf{x}_{d,t^*}$ at timestep $t^*$ from the denoised sample set $\mathcal{X}_d$ as the final augmented negative.

\subsection{Optimization Objectives}
To optimize the recommendation model, we adopt BPR \cite{rendle2012bpr}, a widely established objective for pairwise learning from implicit feedback. 
Given that ADAR operates as a model-agnostic augmentation module, it can be seamlessly integrated into existing recommendation models without modifying their core structure. 

In our formulation, let $\mathbf{e}_j$ denote the item embedding selected via the baseline negative sampling method, and $\mathbf{e}_d$ denote the embedding of the sample generated through ADAR.
To fully exploit the information of generated negatives, we integrate them into the learning objectives as a supplement. 
Specifically, the total loss is given by
\begin{equation}
    \label{bpr_loss}
    \mathcal{L}_{d-bpr}=-\sum_{u,i,j,d}\ln{\sigma(\mathbf{e}_u^\top \mathbf{e}_{i}-(\mathbf{e}_u^\top \mathbf{e}_{j} + \lambda \mathbf{e}_u^\top \mathbf{e}_d))} ,
\end{equation}
where the weighting factor $\lambda\in[0,1]$ is a hyperparameter that regulates the contribution of the generated samples.
ADAR introduces negative samples with controlled hardness, which encourages the model to construct more discriminative representations. 
We adopt an alternating training strategy between the diffusion model and the encoder.
For a comprehensive understanding of the training procedure of ADAR, the training pseudo-code is provided in Algorithm~\ref{algo:DAR}.

\begin{algorithm}[h]
\caption{ADAR}
\label{algo:DAR}
\begin{algorithmic}[1]
    \STATE \textbf{Input:} Set of Implicit Feedback $\mathcal{D} ^+=\{(u,i)|u\in \mathcal{U},i\in \mathcal{I}\}$, predefined hyperparameters $T$ and $\lambda$.
    \STATE \textbf{Output:} Encoder model, Diffusion model.
    \STATE Initialization
    \WHILE{not converged}
        \FOR{each mini-batch $\mathcal{B}$ sampled from $\mathcal{D} ^+$}
            \STATE Get embeddings of user $u$ and positive item $i$
            \STATE // Diffusion-based sample generation
            \STATE $\mathbf{x}_t=\sqrt{\bar{\alpha }_t }\mathbf{x}_0+\sqrt{(1-\bar{\alpha }_t)}\epsilon_t$, $\epsilon_t\sim\mathcal{N}(0,\mathbf{I})$
            \STATE $\mathbf{e}_t=\mathrm{PE}(t)$
            \STATE Optimization Diffusion by minimizing $\mathcal{L}_{diff}$
            \STATE Iteratively compute the reverse diffusion step following $ \mathbf{x}_{d,t-1} = \frac{1}{\sqrt{\alpha_t}} \mathbf{x}_{d,t} - \frac{1 - \alpha_t}{\sqrt{\alpha_t} \sqrt{1 - \bar{\alpha}_t}} \epsilon_{t,\theta}$
            \STATE Obtain candidate samples set $\mathcal{X}_d=\{\mathbf{x}_{t=0}^T\}$
            \STATE // Transition point detection
            \STATE $t^* = \text{sigmoid}(\omega \cdot \exp(k \cdot p_s)) \cdot T$
            \STATE Obtain embedding of transition point $\mathbf{x}_{d,t^*}$
            \STATE $\mathbf{e}_d = \mathbf{x}_{d,t^*}$
            \STATE // Train Encoder model
            \STATE Get triples $(u,i,j)$ from existing method
            \STATE Optimization Encoder by minimizing $\mathcal{L}_{d-bpr}$
        \ENDFOR
    \ENDWHILE
\end{algorithmic}
\end{algorithm}

\section{Experiments}
\subsection{Experimental Settings}
\subsubsection{Datasets.}
We conduct experiments on four benchmark datasets: Beauty, Toys, Sport, and Yelp, which are commonly used in both collaborative filtering (CF) and sequential recommendation (SR) tasks. All datasets are derived from the Amazon Review dataset and the Yelp Review dataset. The detailed statistics of the four datasets are summarized in Table~\ref{tab:dataset_stats}. 
For CF task, we follow standard preprocessing procedures \cite{lai2024ahns} by converting user behavior logs into user-item interaction matrices. Each user’s interactions are randomly split into 80\% for training and 20\% for testing. 
For SR task, user interactions are first sorted chronologically into sequences, and we filter out users with fewer than five interactions to ensure modeling stability.

\begingroup
\renewcommand{\arraystretch}{1.05}
\begin{table}[t]
\centering
\begin{tabularx}{\columnwidth}{cccccc}
\hline
Dataset & \#Users & \#Items & \begin{tabular}[c]{@{}c@{}}\#Inter-\\ actions\end{tabular} & \begin{tabular}[c]{@{}c@{}}Avg.\\ Length\end{tabular} & Density \\ \hline
Beauty  & 22,363  & 12,101  & 0.2m           & 8.9       & 0.05\%  \\
Toys    & 19,412  & 11,924  & 0.17m          & 8.6       & 0.07\%  \\
Sport   & 35,598  & 18,357  & 0.3m           & 8.3       & 0.05\%  \\
Yelp    & 30,431  & 20,033  & 0.3m           & 8.3       & 0.05\%  \\
\hline
\end{tabularx}
\caption{Statistics of the datasets.}
\label{tab:dataset_stats}
\end{table}
\endgroup

\begingroup
\setlength{\tabcolsep}{4pt}
\renewcommand{\arraystretch}{1.1}
\begin{table*}[!t]
    \centering
    \begin{tabular}{cc|ccc|ccc|ccc|ccc}
        \hline
        \multirow{2}{*}{Task} &\multirow{2}{*}{Methods} & \multicolumn{3}{c|}{Amazon-Beauty} & \multicolumn{3}{c|}{Amzon-Toys} & \multicolumn{3}{c|}{Amazon-Sport} & \multicolumn{3}{c}{Yelp} \\ \cline{3-14} 
        &  &R@10 &R@20 & N@10 & R@10 &R@20 & N@10 & R@10 &R@20 & N@10 & R@10 &R@20 & N@10 \\ 
        \hline
        \multirow{12}{*}{CF} &NGCF  &6.37  &9.50  &4.11  &5.94  &8.50  &3.92  &3.92  &6.21  &2.50  &4.27  &6.92  &2.73 \\ 
        &\textbf{+ADAR}  &\textbf{8.13}  &\textbf{11.76}  &\textbf{5.21}  &\textbf{7.82}  &\textbf{11.39}  &\textbf{5.08}  &\textbf{5.29}  &\textbf{8.11}  &\textbf{3.35}  &\textbf{5.06}  &\textbf{8.19}  &\textbf{3.24} \\ 
        &\texttt{RelImp}  &27.6\%  &23.8\%  &26.8\%  &31.6\%  &34.0\%  &29.6\%  &34.9\%  &30.6\%  &34.0\%  &18.5\%  &18.3\%  &18.7\% \\
        &LightGCN  &8.69  &12.40  &5.77  &8.22  &11.50  &5.55  &5.96  &8.62  &3.82  &5.27  &8.33  &3.39 \\ 
        &\textbf{+ADAR}  &\textbf{8.99}  &\textbf{12.88}  &\textbf{6.05}  &\textbf{8.46}  &\textbf{11.93}  &\textbf{5.84}  &\textbf{6.32}    &\textbf{9.08}  &\textbf{4.11}  &\textbf{5.55}  &\textbf{8.81}  &\textbf{3.58} \\ 
        &\texttt{RelImp}  &3.45\%  &3.55\%  &4.85\%  &2.92\%  &3.74\%  &5.23\%  &6.04\%  &5.34\%  &7.59\%  &5.31\%  &5.76\%  &5.60\% \\
        &MixGCF &9.82  &13.67  &6.63  &9.05  &12.82  &6.19  &6.25  &8.98  &4.21  &5.80  &9.01  &3.72 \\ 
        &\textbf{+ADAR}  &\textbf{10.01}  &\textbf{14.10}  &\textbf{6.90}  &\textbf{9.25}  &\textbf{13.09}  &\textbf{6.39}  &\textbf{6.31}  &\textbf{9.13}  &\textbf{4.32}  &\textbf{5.95}  &\textbf{9.31}  &\textbf{3.86} \\ 
        &\texttt{RelImp}  &1.93\%  &3.15\%  &4.07\%  &2.21\%  &2.11\%  &3.23\%  &0.96\%  &1.67\%  &2.61\%  &2.59\%  &3.33\%  &3.76\% \\
        &AHNS  &9.66  &13.50  &6.51  &8.89  &12.55  &6.06  &6.69  &9.70  & 4.39  &5.53  &8.88  &3.60 \\ 
        &\textbf{+ADAR}  &\textbf{10.05}  &\textbf{14.04}  &\textbf{6.89}  &\textbf{9.15}  &\textbf{12.91}  &\textbf{6.21}  &\textbf{6.84}  &\textbf{9.85}  &\textbf{4.52}  &\textbf{5.82}  &\textbf{9.19}  &\textbf{3.80} \\ 
        &\texttt{RelImp}  &4.04\%  &4.00\%  &5.84\%  &2.92\%  &2.87\%  &2.48\%  &2.24\%  &1.55\%  &2.96\%  &5.24\%  &3.49\%  &5.56\% \\ 
        \hline
        \hline
        \multirow{9}{*}{SR} &GRU4Rec  &2.79  &4.87  &1.37  &2.24  &3.71  &1.08  &1.81  &3.17  &0.90  &2.30  &3.88  &1.14 \\ 
        &\textbf{+ADAR}  &\textbf{4.34}  &\textbf{6.96}  &\textbf{2.17}  &\textbf{3.78}  &\textbf{5.84}  &\textbf{1.89}  &\textbf{2.55}  &\textbf{4.09}  &\textbf{1.39}  &\textbf{3.21}  &\textbf{5.17}  &\textbf{1.59} \\
        &\texttt{RelImp}  &55.6\%  &42.9\%  &58.4\%  &68.8\%  &57.4\%  &73.4\%  &40.9\%  &29.0\%  &54.4\%  &39.6\%  &33.2\%  &39.5\% \\      
        &SASRec  &5.99  &8.82  &3.20  &7.27  &9.95  &4.13  &2.97  &4.72  &1.58  &2.99  &4.99  &1.50 \\ 
        &\textbf{+ADAR}  &\textbf{6.32}  &\textbf{9.69}  &\textbf{3.27}  &\textbf{7.72}  &\textbf{10.52}  &\textbf{4.50}  &\textbf{3.62}  &\textbf{5.71}  &\textbf{1.89}  &\textbf{3.30}  &\textbf{5.58}  &\textbf{1.66}\\ 
        &\texttt{RelImp}  &5.51\%  &9.87\%  &2.19\%  &6.19\%  &5.73\%  &8.96\%  &21.9\%  &21.0\%  &19.6\%  &10.4\%  &11.8\%  &10.7\% \\
        &CL4SRec  &7.24  &10.26  &4.20  &7.80  &10.59  &4.54  &4.33  &6.19  &2.37  &3.27  &5.58  &1.60 \\ 
        &\textbf{+ADAR}  &\textbf{7.44}  &\textbf{10.65}  &\textbf{4.31}  &\textbf{8.12}  &\textbf{11.13}  &\textbf{4.73}  &\textbf{4.46}  &\textbf{6.35}  &\textbf{2.50}  &\textbf{3.69}  &\textbf{5.86}  &\textbf{1.82} \\ 
        &\texttt{RelImp}  &2.76\%  &3.80\%  &2.62\%  &4.10\%  &5.10\%  &4.19\%  &3.00\%  &2.58\%  &5.49\%  &12.8\%  &5.02\%  &13.8\% \\ 
        \hline
    \end{tabular}
    \caption{Performance on four datasets. 
    \texttt{RelImp}: Relative Improvement.
    We have conducted the paired t-test to verify that the difference between each base model and our proposed method is statistically significant for $p<0.05$.}
    \label{tab:main_results}
\end{table*}
\endgroup

\subsubsection{Evaluation Metrics.}
To assess recommendation performance, we adopt two widely used ranking metrics: Recall@{10, 20} and NDCG@{10, 20}. Recall measures the proportion of relevant items successfully retrieved in the top-$k$ recommendations, while NDCG accounts for the position of the hit items, giving higher weight to correctly ranked items appearing earlier in the list.

\subsubsection{Integrated Methods.}
To evaluate the effectiveness of our proposed ADAR, we conduct experiments under two major recommendation paradigms: collaborative filtering (CF) and sequential recommendation (SR).

For CF, we adopt the following models: NGCF \cite{wang2019ngcf}, LightGCN \cite{he2020lightgcn}, MixGCF \cite{huang2021mixgcf} and AHNS \cite{lai2024ahns}. 
For SR, we consider three representative sequential models with datasets augmented by DR4SR \cite{yin2024dr4sr}: 
GRU4Rec \cite{hidasi2015gru4rec}, SASRec \cite{kang2018sasrec} and CL4SRec \cite{xie2022cl4srec}.
See Appendix for a more detailed description of the baselines.

For each model, we incorporate our ADAR without modifying their core architectures, ensuring a fair and direct comparison with the original baselines.

\subsubsection{Implementation Details.}
The embedding dimension is fixed to 64, and the embedding parameters are initialized with the Xavier initialization \cite{glorot2010xavier}. 
We optimize all parameters with Adam optimizer \cite{kingma2014adam} and use the default learning rate of 0.001. For CF, we use the default mini-batch size of 2,048. For SR, we use the default mini-batch size of 256, with the maximum sequence length $N$ set to 50. 
For ADAR, we fix $\omega=1$ and $k=1$, and search $\lambda$ in the range $[0,1]$.
We set the maximum diffusion step $T$ to 50 for CF and 20 for SR.

\subsection{Result Analysis}
In this section, we compare the performance of each target model with our ADAR to verify the efficacy of the proposed module. 
The evaluation results for recommendations are summarized in Table~\ref{tab:main_results} (more details in Appendix), and we can draw the following conclusions:
\begin{itemize}
    \item \textbf{Effectiveness.} ADAR consistently improves performance across multiple datasets and baseline models. For example, AHNS enhanced by ADAR leads to a relative improvement of 3.13\% in Recall@10 and 4.58\% in NDCG@10 on Amazon-Beauty. Similar trends are observed across remaining methods, confirming its effectiveness in improving ranking quality.
    \item \textbf{Generality.} ADAR exhibits strong compatibility with diverse model architectures, including CF and SR models. ADAR can be compatible with models such as GRU4Rec (RNN-based), SASRec (attention-based), and CL4SRec (contrastive-based), achieving performance improvements without modifying the underlying model architecture.
    \item \textbf{Complementarity.} Beyond general applicability, ADAR also complements strong existing training objectives. For instance, ADAR improves CL4SRec’s NDCG@10 by 13.8\% on Yelp, indicating that it introduces additional beneficial training signals beyond those captured by contrastive learning alone.
\end{itemize}

\begingroup
\renewcommand{\arraystretch}{1.02}
\begin{table}[]
\begin{tabularx}{\linewidth}{l l *{4}{>{\centering\arraybackslash}X}}
\hline
Method                  & Version       & R@10 & R@20 & N@10 & N@20 \\ \hline
\multirow{4}{*}{NGCF}   & base          &6.37      &9.50      &4.11      &5.10      \\
                        & random        &7.81      &11.59      &5.13      &6.29      \\
                        & fixed         &7.79      &11.50      &5.06      &6.22      \\
                        & mixed         &7.91      &11.63      &5.11      &6.30      \\
                        & \textbf{ADAR}  &\textbf{8.13} &\textbf{11.76} &\textbf{5.21} &\textbf{6.34} \\ 
                        \hline
\multirow{4}{*}{GRU4Rec} & base         &2.79      &4.87      &1.37      &1.90      \\
                        & random        &4.00      &6.31      &1.99      &2.45      \\
                        & fixed         &3.37      &5.59      &1.66      &2.22      \\
                        & mixed         &3.51      &5.76      &1.77      &2.33      \\
                        & \textbf{ADAR}  &\textbf{4.34} &\textbf{6.96} &\textbf{2.17} &\textbf{2.82} \\ \hline
\end{tabularx}
\caption{Performance comparison of different $t^*$ selection strategies on Amazon-beauty.}
\label{tab:variant}
\end{table}
\endgroup

\subsection{Variant Study}
In this section, we investigate the impact of different strategies for selecting $t^*$. 
Recall that $t^*$ denotes the diffusion step of the final augmented negative, at which the generative sample translate from positive to negative. 
We compare five variants: 
(1) base, where no augmentation is applied; 
(2) random, where $t^*$ is randomly sampled; 
(3) fixed, where $t^*$ is fixed at a predefined midpoint ( we set it as $T$/2); 
(4) mixed, where model mixes representations from multiple fixed $t$ (e.g., $T/2$, $T/4$, $T/8$ and $T/10$), motivated by DMNS \cite{nguyen2024dmns} on link prediction;
(5) ADAR, the proposed method, which adaptively identifies $t^*$ based on theoretical criteria.

The experiments are conducted on Amazon-Beauty, evaluating two representative recommendation models: NGCF and GRU4Rec. 
We report the results in Table~\ref{tab:variant}, and make the following observations:
First, ADAR consistently outperforms all variants on both models, confirming that adaptive selection of $t^*$ yields more informative negatives than other variants.
Second, while effective in link prediction, the mixed variant is less suited to recommendation, performing worse than ADAR and even worse than the random variant.
This indicates that adaptive, theory-driven transition point is crucial for choosing high-quality negatives.

\begin{figure}
    \centering
    \begin{subfigure}[b]{0.23\textwidth}
        \centering
        \includegraphics[width=\textwidth]{./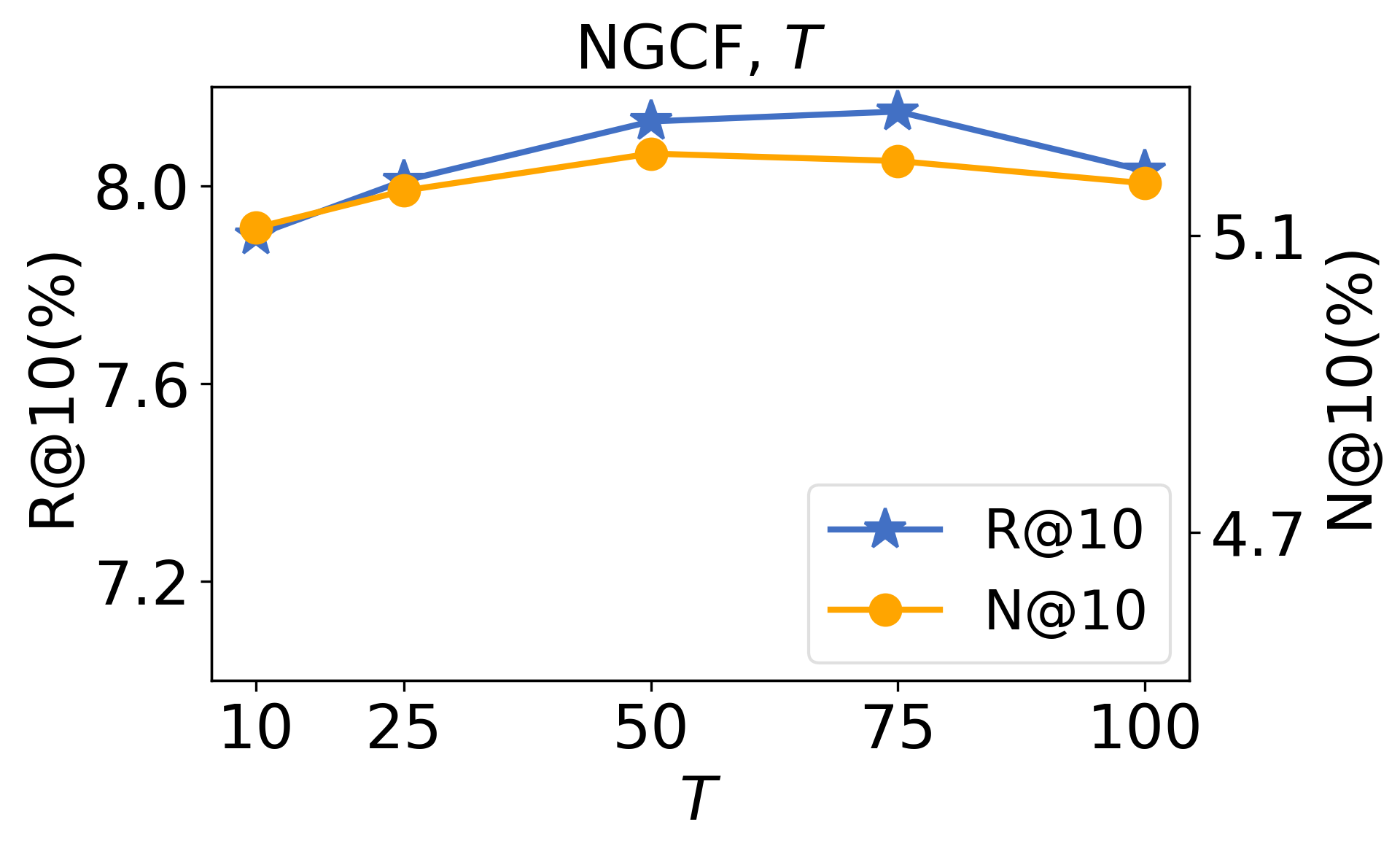}
        \caption{NGCF, $T$}
        \label{fig:t_ngcf}
    \end{subfigure}
    \hspace{0.1em}
    \begin{subfigure}[b]{0.23\textwidth}
        \centering
        \includegraphics[width=\textwidth]{./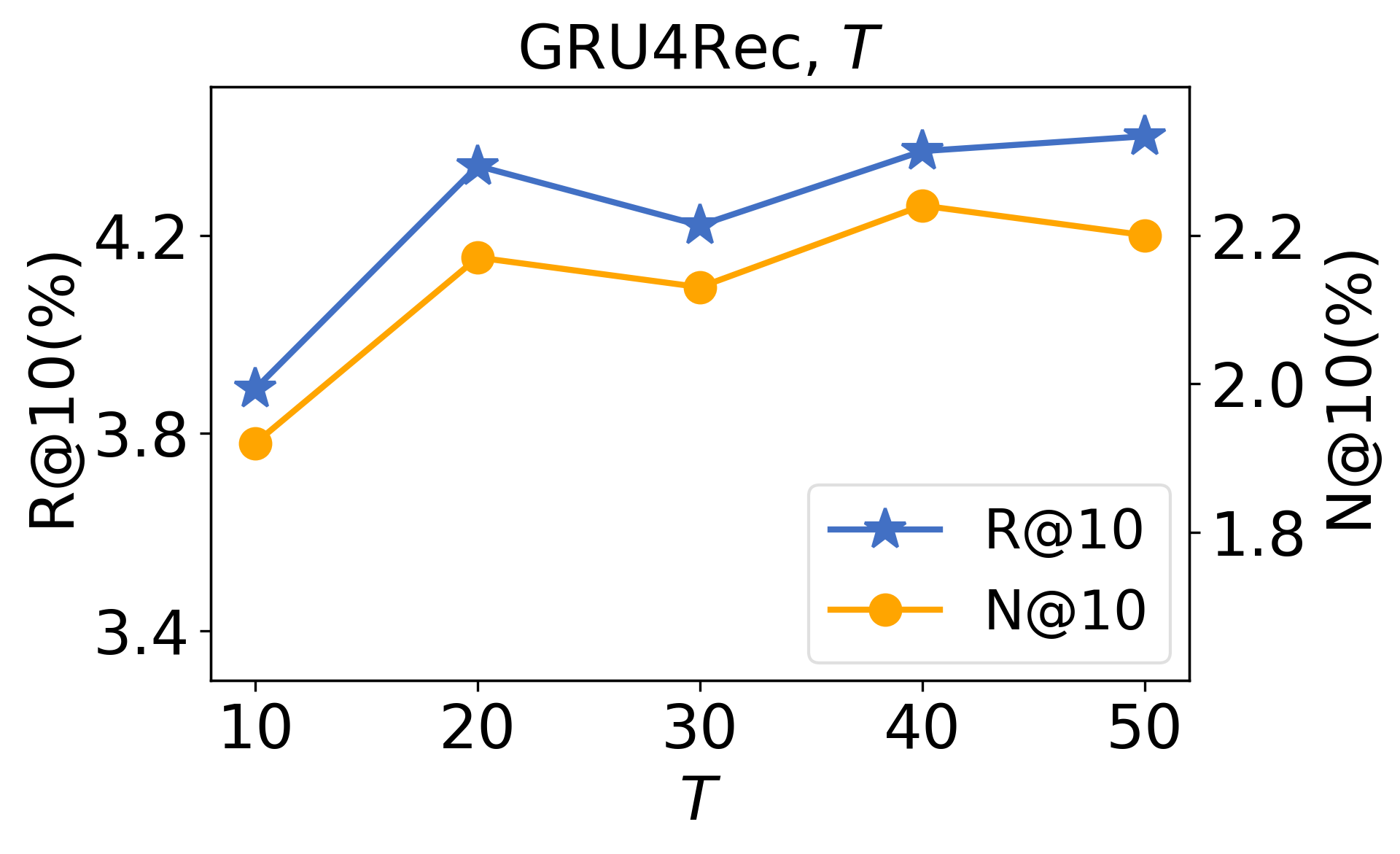}
        \caption{GRU4Rec, $T$}
        \label{fig:t_gru}
    \end{subfigure}

    \vspace{0.5em}
    
    \begin{subfigure}[b]{0.23\textwidth}
        \centering
        \includegraphics[width=\textwidth]{./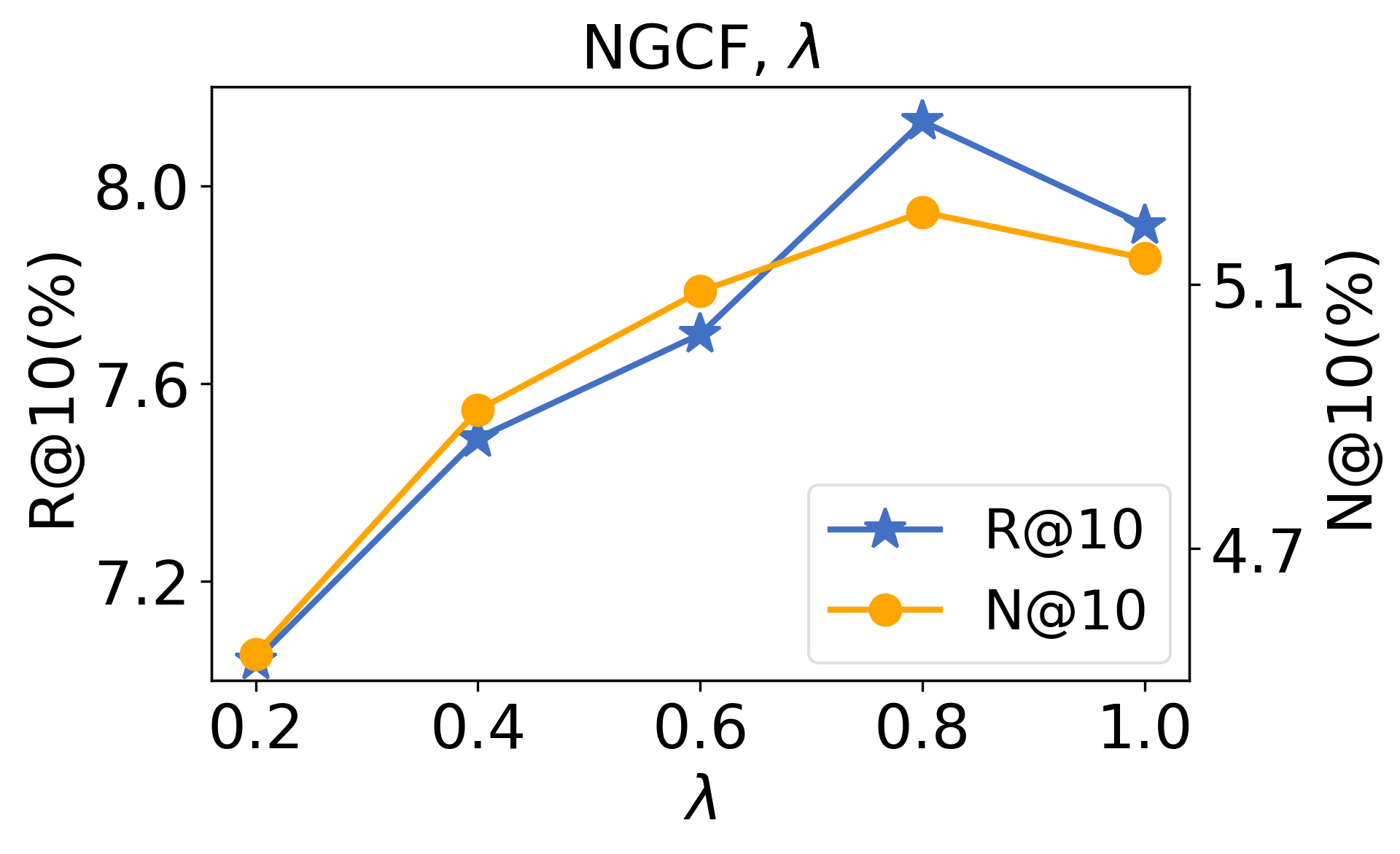}
        \caption{NGCF, $\lambda$}
        \label{fig:lamda_ngcf}
    \end{subfigure}
    \hspace{0.1em}
    \begin{subfigure}[b]{0.23\textwidth}
        \centering
        \includegraphics[width=\textwidth]{./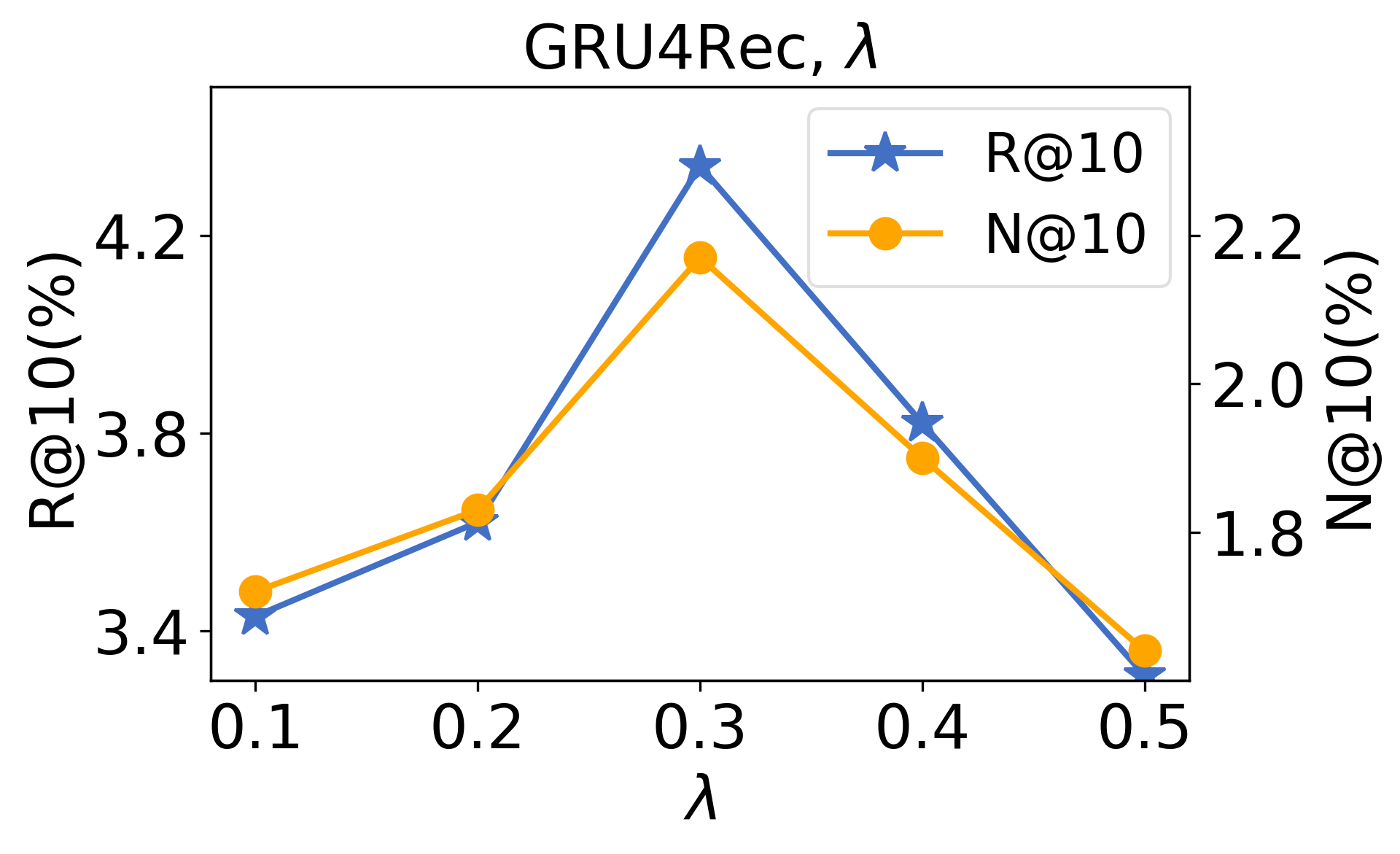}
        \caption{GRU4Rec, $\lambda$}
        \label{fig:lamda_gru}
    \end{subfigure}
    
    \caption{Performance comparison of ADAR \textit{w.r.t.} three different hyperparameter.}
    \label{fig:parameter}
\end{figure}

\subsection{Hyperparameter}
To further investigate robustness and stability of ADAR, we conduct a hyperparameter analysis on the diffusion step $T$ and weighting factor $\lambda$. The analysis of the hyperparameters $\omega$ and $k$ can be found in Appendix.
In this analysis, we fix the other hyperparameters to their default values to isolate the effect of each individual variable. 
The experiments are conducted on Amazon-beauty using two representative backbone models: NGCF for collaborative filtering and GRU4Rec for sequential recommendation. 
As shown in Figure~\ref{fig:parameter}(a) and \ref{fig:parameter}(b), the performance initially improves as $T$ increases, but eventually saturates, suggesting that only a moderate number of diffusion steps is sufficient to capture the semantic transition.
In contrast, Figure~\ref{fig:parameter}(c) and \ref{fig:parameter}(d) show that $\lambda$ exhibits a clear trade-off behavior, where moderate values consistently lead to better performance. 
This suggests that incorporating diffusion signal moderately can effectively guide negative sample selection.

\subsection{Visualization}
In this section, we delve into the visualization analysis to further examine the effectiveness of ADAR in capturing meaningful user preferences. Specifically, we randomly sample 2,000 users from Amazon-Beauty and project their learned user embeddings into a two-dimensional space. 
This is achieved by normalizing each representation onto the unit hypersphere and applying t-SNE \cite{maaten2008tsne} for dimensionality reduction. 
As shown in Figure~\ref{fig:vis}, user representations learned by baseline models (NGCF and SASRec) tend to exhibit uneven and clustered patterns. In contrast, their ADAR-augmented counterparts (NGCF with ADAR and SASRec with ADAR) yield embeddings that are more uniformly distributed across the space. This suggests that incorporating diffusion-based augmentation encourages a more informative and disentangled user representation space, potentially leading to improved generalization.

\begin{figure}
    \centering
    \begin{subfigure}[b]{0.23\textwidth}
        \centering
        \includegraphics[width=\textwidth]{./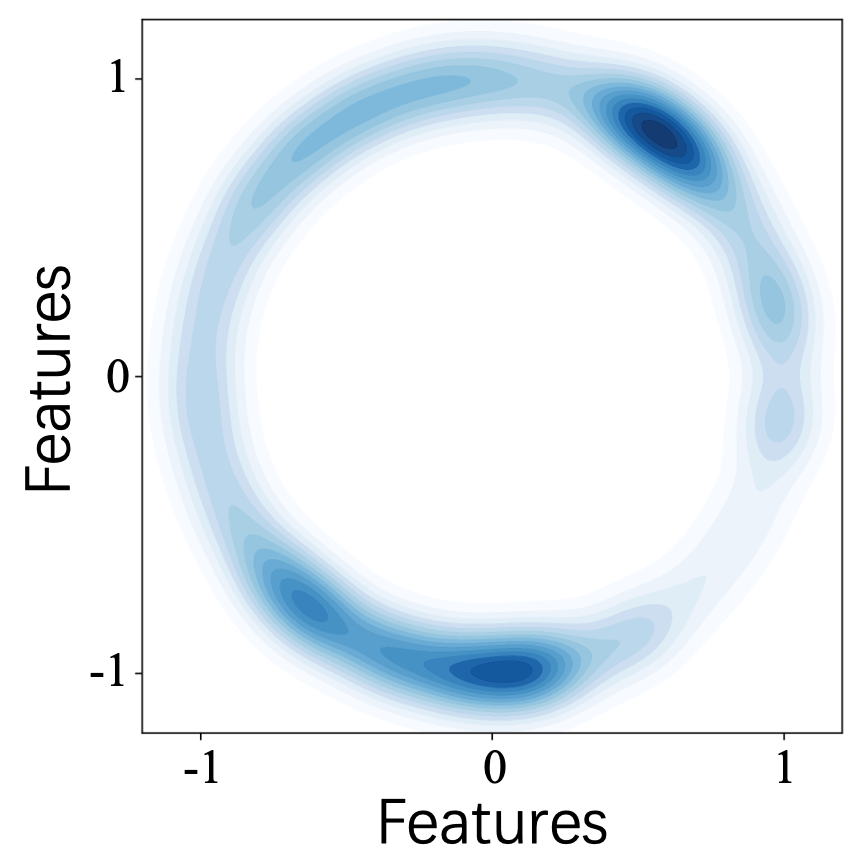}
        \caption{NGCF}
        \label{fig:ngcf}
    \end{subfigure}
    \hspace{0.1em}
    \begin{subfigure}[b]{0.23\textwidth}
        \centering
        \includegraphics[width=\textwidth]{./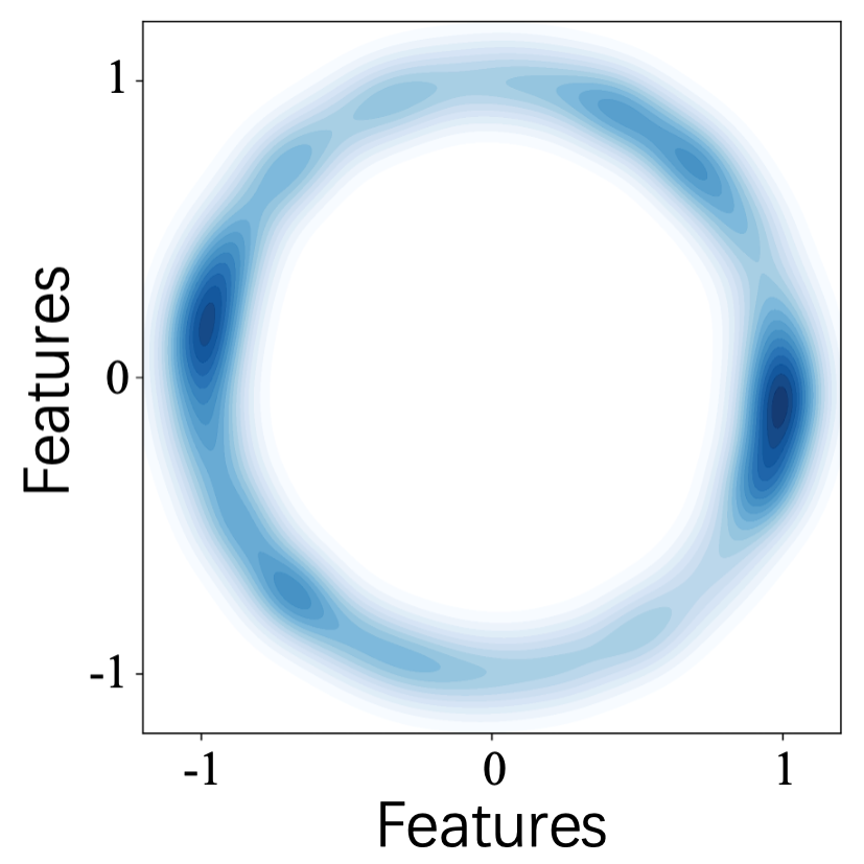}
        \caption{NGCF with ADAR}
        \label{fig:dra-ngcf}
    \end{subfigure}

    \vspace{0.5em}
    
    \begin{subfigure}[b]{0.23\textwidth}
        \centering
        \includegraphics[width=\textwidth]{./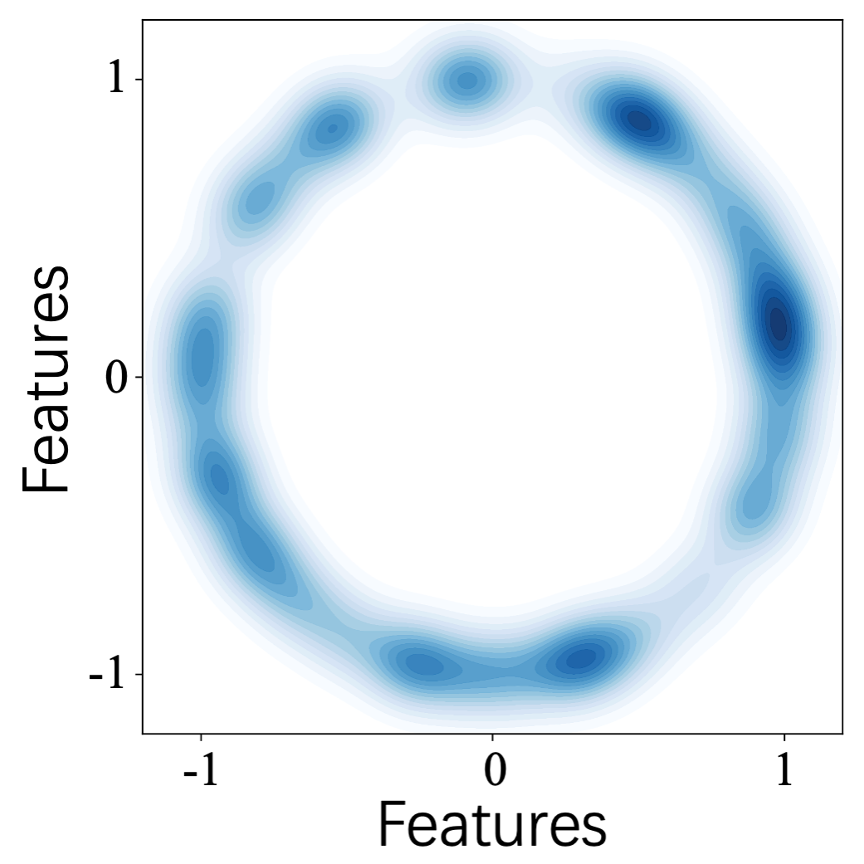}
        \caption{SASRec}
        \label{fig:sasrec}
    \end{subfigure}
    \hspace{0.1em}
    \begin{subfigure}[b]{0.23\textwidth}
        \centering
        \includegraphics[width=\textwidth]{./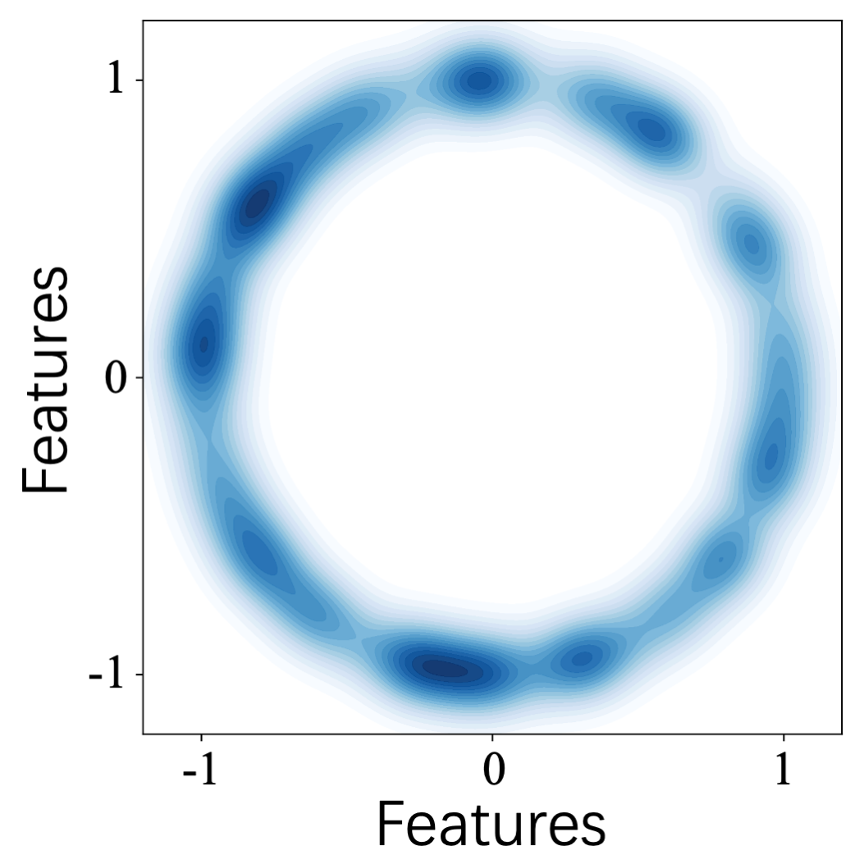}
        \caption{SASRec with ADAR}
        \label{fig:dra-sasrec}
    \end{subfigure}
    
    \caption{Distribution of user representations learned from Amazon-beauty dataset.}
    \label{fig:vis}
\end{figure}

\section{Conclusion}
To solve the challenge of unreliable negative signals and limited control over negative sampling in implicit feedback recommendation, we propose ADAR, a diffusion-based augmentation module.
ADAR leverages the progressive corruption process of diffusion models to generate high-quality negative samples. 
By theoretically identifying the transition point from positive to negative, ADAR adaptively selects optimal sampling steps with informative and challenging negatives.
Extensive experiments on various backbone models and recommendation tasks demonstrate that ADAR is both effective and broadly applicable, offering a robust solution for enhancing recommendation performance.
Our code is publically available at https://github.com/LN-Nlaine/ADAR.

\section*{Acknowledgments}
This work was supported by Natural Science Foundation of Heilongjiang Province of China (No. LH2024F023).

\bibliography{aaai2026}

\end{document}